\newtheorem{theorem}{Theorem}
\newcounter{example}
\newenvironment{example}[1][]{\refstepcounter{example}\par\medskip
   \noindent \textbf{Example~\theexample. #1} \rmfamily}{\medskip}
\title[Theory and Practice of Logic Programming]
      {Top-down and Bottom-up Evaluation \\ Procedurally Integrated}
\author[D. S. Warren]
       {DAVID S. WARREN\\
         Stony Brook University, Stony Brook, NY, 11794-4400 USA\\
         \email{warren@cs.stonybrook.edu}\\
         XSB Inc., Setauket, NY, 11733 USA\\
         \email{warren@xsb.com} }
\begin{document}
\maketitle              
\begin{abstract}
This paper describes how XSB combines top-down and bottom-up
computation through the mechanisms of variant tabling and subsumptive
tabling with abstraction, respectively.

It is well known that top-down evaluation of logical rules in Prolog
has a procedural interpretation as recursive procedure invocation
\cite{Kowalski:1986:LP:13060}.  Tabling adds the intuition of
short-circuiting redundant computations \cite{memoing:warren}.  This
paper shows how to introduce into tabled logic program evaluation a
bottom-up component, whose procedural intuition is the initialization
of a data structure, in which a relation is initially computed and
filled, on first demand, and then used throughout the remainder of a
larger computation for efficient lookup.  This allows many Prolog
programs to be expressed fully declaratively, programs which formerly
required procedural features, such as assert, to be made efficient.

This paper is under consideration for acceptance in {\em Theory and
  Practice of Logic Programming (TPLP)}.
\end{abstract}
\begin{keywords}
top-down, bottom-up, logic programming, tabling, Prolog, procedural
interpretation
\end{keywords}

\section{Introduction}
Top-down and bottom-up algorithms have long been seen as alternative,
even competitive, ways to evaluate Datalog (and Prolog) programs
\cite{Lloyd:1993:FLP:529834,bry-reconciled}, with proponents of each
algorithm arguing for its particular advantages
\cite{Ullman:1989:BBT:73721.73736,Toman95top-downbeats,Tekle:2011:MED:1989323.1989393}.
Systems implement either top-down evaluation, e.g., XSB (by default)
\cite{xsbprolog:tplp} and other Prolog systems that include tabling,
such as YAP \cite{DBLP:journals/tplp/CostaRD12}, Ciao Prolog
\cite{DBLP:journals/tplp/HermenegildoBCLMMP12}, BProlog
\cite{DBLP:journals/tplp/Zhou12} and as a library in SWI Prolog
\cite{SWI-tabling}, or bottom-up evaluation, e.g., LDL++
\cite{LDLplusplus}, Coral \cite{DBLP:journals/vldb/RamakrishnanSSS94}
and LogicBlox \cite{Aref:2015:DIL:2723372.2742796}, among many others.
In this paper we consider the properties and advantages of each
strategy and propose that they be integrated in a single
system, with each having a distinct procedural interpretation.  Then
each can be used when its evaluation strategy is most appropriate and
efficient.  This paper describes how they are integrated in the XSB
tabled logic programming system.

\section{Brief Review of Top-Down and Bottom-up Evaluation}
\label{tdbu-eval}

To briefly review tabled top-down and bottom-up evaluation of Datalog
rule sets, we consider the following well-known example of transitive
closure for a simple graph:

\begin{verbatim}
:- table p/2.
                                   e(a,b).     e(b,c).
p(X,Y) :- e(X,Y).                  e(e,a).     e(c,b).
p(X,Y) :- p(X,Z),e(Z,Y).           e(d,e).
\end{verbatim}
\normalsize

\begin{example}
For query {\tt p(a,A)}, tabled top-down evaluation proceeds as
follows:
\begin{verbatim}
1 p(a,A)            add query p(a,?) to table
2   e(a,A).         resolve 1 with 1st rule
3     () A=b        resolve 2 with fact e(a,b), add answer p(a,b) to table
4   p(a,Z),e(Z,A)   resolve 1 with 2nd rule
5     e(b,A)        resolve 4 with answer p(a,b) from table
6       () A=c      resolve 5 with fact e(b,c), add answer p(a,c) to table
7     e(c,A)        resolve 4 with answer p(a,c) from table
8       () A=b      resolve 7 with fact e(c,b), p(a,b) in table, don't add
\end{verbatim}
\normalsize
\end{example}
  
\begin{example} \label{bu-p2}
Bottom-up evaluation for the same query proceeds in iterations. At
each iteration it uses the program rules and the facts currently known
to infer new facts.

\begin{verbatim}
Iteration 0: infer the program facts:
        e(a,b), e(e,a), e(d,e), e(b,c), e(c,b)
Iteration 1: Use those facts and program rule 1 to infer:
  new:  p(a,b), p(e,a), p(d,e), p(b,c), p(c,b)
Iteration 2: Use rule 2 and previous inferred facts to infer:
  new:  p(a,c) [from p(a,b) and e(b,c)]
        p(e,b) [from p(e,a) and e(a,b)]
        p(d,a) [from p(d,e) and e(e,a)]
        p(b,b) [from p(b,c) and e(c,b)]
        p(c,c) [from p(c,b) and e(b,c)]
Iteration 3: Again use rule 2 with previously inferred facts to infer:
  new:  p(e,c) [from p(e,b) and e(b,c)
        p(d,b) [from p(d,a) and e(a,b)
          (We get duplicates, e.g., p(a,b) from p(a,c) and e(c,b),
           but they aren't added.)
Iteration 4: Again use rule 2
        p(d,c) [from p(d,b) and e(b,c)]
Iteration 5: Nothing new to infer, so stop
\end{verbatim}
\normalsize
\end{example}

These logs show the difference: Top-down uses the binding on the first
argument of the query to guide its search, so here it never
has to look at nodes {\tt d} and {\tt e} and their successors.
Bottom-up computes the entire transitive closure relation, and would
then select from it those p-tuples with value {\tt a}
in the first position.  This might be more efficient if we were to ask
many queries to the {\tt p/2} relation, with different first
arguments, since the {\tt p} relation would be computed only once, and
then every subsequent query would require only a simple lookup in that
table.

In this particular case, there is not much to gain from computing the
entire relation first, because the top-down strategy (saving tables
across queries) would efficiently compute only the new tuples required
for the current query, using the previously computed answers as
needed.  But there are (many) cases in which that efficient
incremental computation is not what happens with multiple top-down
queries to the same relation.

\section{Example of Need for Bottom-Up Evaluation}

\begin{example} \label{corpus}
Consider a simple system with a corpus of sentences,
which allows a user to input a sentence and find all the sentences in
the corpus that share a word with the input sentence.  The system uses
a relation of strings, called {\tt corpus(CorSent)}, which contains
the sentences of the corpus, and defines a predicate {\tt
  share(+InSent,-CorSent,-Word)}, that, given an input sentence,
returns each corpus sentence that shares a word with it, and the shared
word.  It can be defined using a helper tokenizing
predicate, {\tt scan/2}, as follows:

\begin{verbatim}
share(InSent,CorSent,Word) :-
    scan(InSent,InWordList),
    corpus(CorSent),
    scan(CorSent,CorWordList),
    member(Word,InWordList),
    member(Word,CorWordList).
\end{verbatim}
\normalsize

Top-down evaluation by Prolog proceeds as follows:
Given a string as input, {\tt scan/2} tokenizes it,
providing the list of words it contains.  Then for each sentence in
the corpus, we tokenize it, and return the corpus sentence for
each word in the input sentence token list that is also in the corpus
sentence token list.
\end{example}

This is a reasonable approach for a single query, but if we want to
answer multiple similar queries to the same corpus, this will
re-tokenize every string in the corpus for each query, which would be
extremely redundant and inefficient: Answering a
single query for one input sentence essentially requires almost all
the work (i.e., tokenizing every corpus sentence) that is necessary to
answer a query for any input sentence.  In the presence of multiple
queries, it would be much better to construct a table of
the corpus sentences and their tokens only once, and then use this
table to answer each input query.  We can simply reorder and fold the
above program clause to create and use that {\tt corpus\_word/2}
relation as follows:

\begin{example}
\begin{verbatim}
share(InSent,CorSent,Word) :-
    scan(InSent,InWordList),
    member(Word,InWordList),
    corpus_word(CorSent,Word).
    
corpus_word(CorSent,Word) :-
    corpus(CorSent),
    scan(CorSent,CorWordList),
    member(Word,CorWordList).
\end{verbatim}
\normalsize
\end{example}

\noindent This separates out the relation we want to generate and reuse, but
notice that variant tabling does not solve our redundant computation
problem.  The predicate {\tt corpus\_word/2} will be called with its
second argument, {\tt Word}, bound.  So for each different word in the
input sentence, {\tt corpus\_word/2} will be entered, thus re-scanning
every corpus sentence for each new input word.  This is even worse than
the previous unfolded program.  

However, if we can ensure that the {\tt corpus\_word/2} definition is
evaluated bottom-up, generating a complete table that is indexed on
its second argument and that table is used for every call from {\tt
  share/3}, then the redundant computation will be eliminated.

So this is an example of a situation in which we want to build (i.e.,
initialize) a table containing more answers than we immediately need:
we want answers computed for all corpus words, even though we
initially need only those for a single word.

In procedural terms, we understand the bottom-up evaluation of {\tt
  corpus\_word/2} as building an intermediate data structure to more
efficiently evaluate various subqueries that will be needed later.
This is a common requirement in algorithm development.  Experienced
Prolog programmers normally implement such requirements
non-declaratively by initially computing the necessary tuples and
adding them to a dynamic predicate using the Prolog built-in {\tt
  assert/1}.  But with bottom-up evaluation, such programs can be
purely declarative and yet have the equivalent computational
properties.

\section{Specifying Bottom-Up Evaluation in XSB}

The question now is how to indicate to the evaluation system that the
first call to {\tt corpus\_word/2} is to cause it to be evaluated
bottom-up to fill the table that is to be indexed on its second
argument, and how the evaluation system will efficiently carry out
that evaluation.  

XSB uses Subsumptive Tabling With Abstraction (STWA) to carry out the
bottom-up evaluation.  Normally tables in XSB are variant tables,
i.e. an entry in a variant table is used to satisfy a subsequent goal
if that goal is a variant of the goal that generated the table entry,
i.e., equal up to change of variable.  In subsumptive tabling
\cite{Johnson1999,Tries@JLP,ejohnson-tapd,EJohnson-thesis,DBLP:journals/corr/abs-1107-5556},
a table entry is used to satisfy a subsequent goal if the subsequent
goal is subsumed by the generating goal, i.e., is an instance of the
generating goal.  Note that in such a case the answers to the subsumed
goal will (eventually) be in the table of answers for the previously
encountered subsuming goal.\footnote{Not all predicates defined in
  Prolog can be correctly tabled.  For example, variant tabled
  predicates should not use built-ins that change the global state,
  such as {\tt assert/1} or {\tt retract/1}; subsumptively tabled
  predicates, in addition, should not use built-ins that depend on the
  instantiation state of variables, such as {\tt var/1} and many uses
  of the cut (!) operation.  } The programmer may specify, using a
{\tt table\_index/2} directive, that when a subsumptively tabled goal
is called, the calling goal should be abstracted before it is actually
invoked, so that a table for the more general call is constructed, and
then only the answers unifying with the original unabstracted goal are
returned to the initial call.  In this case, since a general table is
constructed, a later subgoal that is subsumed by that more general
goal will use this general table to satisfy the later subgoal and not
make the specific subgoal call to do clause resolution.

The {\tt table\_index/2} directive was recently added to XSB to
support the creation of subsumptive tables with multiple arbitrary
indexes \cite{xsbmanual}.  An index\footnote{These index specifications are the
  same as those used for dynamic predicates in XSB.} on a
single argument of the table is specified by just giving an integer
that indicates the position in the tabled predicate of the argument to
be indexed.  Joint indexes, i.e. indexes on multiple arguments, are
specified by terms made by separating the multiple position integers
by the symbol $+$.  Multiple different indexes are provided in a list.
For example, say we want four indexes on a subsumptive table for {\tt
  p/4}, so that a call to {\tt p/4} (after its table is built) first
checks if the call has arguments 1 and 2 bound and if so uses a joint
index on those arguments; and if not, then checks if it has just
argument 1 bound and if so uses an index on that; and if not then
checks if arguments 2, 3 and 4 are bound and if so uses a joint index
on those positions; and if not then checks if argument 4 is bound and
if so uses an index on that position; and if not, then throws an error
(because that would be an unindexed call and we should be notified).
So this would require the directive:
\begin{verbatim}
:- table_index(p/4,[1+2,1,2+3+4,4]).
\end{verbatim}
The {\tt table\_index/2} directive not only specifies the indexes on
the resulting subsumptive table but also what abstraction should be
done to the initial call to the subsumptively tabled predicate.  The
assumption is that every index will have some calls that use it.
Since we want to compute the tuples for only one abstracted call to
the tabled predicate, every position that might be unbound on some
particular call must be abstracted in the initial subsumptive call, so
that the one table will contain all answers for a call not bound on
that position.  This implies that the arguments that should be bound
in the abstracted call to the subsumptively tabled predicate are those
that appear in every specified index; all others must be variables.
For the {\tt p/4} example, there are no such arguments, so the first
call will be fully abstracted, i.e., all arguments will be
variables.\footnote{In this case we call it ``subsumptive tabling with
  {\em full} abstraction''.}

There are situations where we want the initial call to a subsumptively
tabled predicate to be abstracted to the fully open call, even if
there are positions appearing in all index specifications.  For
example, we may need only one index but want full abstraction, as in
our example for {\tt corpus\_word/2}.  In such a case, we must use $0$
as an index specifier, meaning ``no index'', and it is placed last.

Thus for the {\tt corpus\_word/2} subsumptive table, we want:

\begin{verbatim}
:- table_index(corpus_word/2,[2,0]).
\end{verbatim}

\noindent which declares that the predicate {\tt corpus\_word/2} should be
computed bottom-up.  We want an index on the second argument and the
first call should be abstracted to the call with all variables.  Since
every {\tt corpus\_word/2} subgoal is subsumed by this most general
goal, it will never be called after the first time, but the table that
is constructed at the first call will be used to satisfy all those
subsequent calls.  The first call to {\tt corpus\_word/2} will build
(or initialize) the table (or data structure), and from then on every
call will directly access that table; and in our case use the index
for constant time lookup.

The directive {\tt table\_index/2} is implemented in XSB by a program
transformation, which we briefly describe by example.  The following
code (slightly modified for clarity) is generated by the compiler for
the {\tt p/4} example above, \\
\verb|:- table_index(p/4,[1+2,1,2+3+4,4])|:


\begin{verbatim}
:- table p1234/4, p4231/4 as subsumptive.
p(A,B,C,D) :-
    nonvar(A) -> p1234(A,B,C,D)
    ; nonvar(D) -> p4231(D,B,C,A)
    ; table_error('Illegal Mode in call to p/4').
p1234(A,B,C,D) :-
    var(A),var(B),var(C),var(D) -> p_base(A,B,C,D)
    ; p1234(E,F,G,H),E = A,F = B,G = C,H = D.
p4231(A,B,C,D) :-
    var(D),var(B),var(C),var(A) -> p1234(D,B,C,A)
    ; p4231(E,F,G,H),E = A,F = B,G = C,H = D.
p_base(A,B,C,D) :- ... original p rules ...
\end{verbatim}
\normalsize

\noindent {\tt p1234/4} and {\tt p4231/4} are versions of {\tt p/4}
with its arguments permuted.  Subsumptive tables implement trie
indexing, thus providing indexing on any initial sequence of
arguments.  Here two tables are needed: {\tt p1234/4} for indexes $1$
and $1+2$, and {\tt p4231/4} for indexes $4$ and $2+3+4$.\footnote{The
  complier determines the minimum number of tables required to cover
  all the declared indexes.}  The first clause tests the calling mode
to select which tabled permutation to call.  Each permutation
predicate will normally be entered twice, once with some arguments
bound, which will result in a (second) call to itself with all
arguments unbound.  From then on every other call will be subsumed by
that most general call and so its answers will be served from that
generated table.  Notice that {\tt p4231/4} calls {\tt p1234/4} and
not {\tt p\_base/4} and so will be filled by getting its answers from
the table of {\tt p1234/4}.  Thus the original clauses that define
{\tt p/4} (now in {\tt p\_base/4}) will be called only once.

\section{Intermixing Top-down and Bottom-up evaluation}

It is possible to combine top-down filtering with bottom-up
evaluation.

\begin{example}
Say our corpus of sentences in Example \ref{corpus} is
partitioned to indicate those from a single book.  We would have
another argument to {\tt corpus/1} now: \\
{\tt corpus(BookISBN,Corent)}.  And top-level predicate: \\
{\tt share(InSent,BookISBN,CorSent)}, and sub-predicate: \\ {\tt
  corpus\_word(BookISBN,CorSent,Word)}, defined in terms of {\tt
  corpus/2}.

\begin{verbatim}
share(InSent,BookISBN,CorSent) :-
    scan(InSent,InWordList),
    member(Word,InWordList),
    corpus_word(BookISBN,CorSent,Word).
    
corpus_word(BookISBN,CorSent,Word) :-
    corpus(BookISBN,CorSent),
    scan(CorSent,CorWordList),
    member(Word,CorWordList).
\end{verbatim}
\normalsize

We assume that {\tt corpus/2} is indexed on its first field, BookISBN,
and require that BookISBN be bound on a call to {\tt share/3}.  Then we
would invoke partial bottom-up evaluation with the declaration:

\begin{verbatim}
:- table_index(corpus_word/3,[1+3,1]).
\end{verbatim}

\noindent
This will abstract a call to {\tt corpus\_word/3} to {\tt
  corpus\_word(+,-,-)}, where {\tt +} indicates a bound argument and
{\tt -} a variable argument.  Positions that appear in {\em all} indexes
in the {\tt table\_index} declaration, here {\tt 1}, will not be
abstracted; all others will be.  Now the sentences of each book will
be processed bottom-up independently.  When a new book is queried for
the first time, the table for words in sentences for that book will be
computed bottom-up and stored in the table.  Subsequent queries to the
same book will directly use the constructed table.  So this is an
example in which demand-driven computation (here with respect to {\tt BookISBN})
can be combined with data-driven computation (here for the set of
corresponding $<CorSent,Word>$ pairs) to evaluate a query (here to
{\tt share/3}.)
\end{example}

That a data-driven computation can use a demand-driven subquery has
already been seen in the original Example \ref{corpus}.  While we
haven't shown the code for the {\tt scan/2} predicate, it is naturally
defined using demand-driven computation.  E.g., a DCG defining the
scanner would use top-down evaluation of a subquery.  So the original
example shows how the bottom-up evaluation of {\tt corpus\_word(-,-)}
uses the top-down evaluation of {\tt scan(+,-)}.

\section{Procedural Interpretation of Bottom-up Evaluation}

To understand top-down evaluation of rules, one intuitively thinks of
subgoal solving as nondeterministic procedure invocation, with
backtracking to explore alternate computation paths.  To understand
top-down evaluation with tabling, one thinks of procedure invocation
but with short-circuiting for previously encountered procedures (with
the same parameters).  To understand bottom-up evaluation as provided
by Subsumptive Tabling With Abstraction (STWA) through the {\tt
  table\_index/2} directive, one thinks of indexed table construction
(or initialization) from which future retrieval will be done.

A bottom-up computation is appropriate when there is a predicate that
is efficiently computable in one mode and yet it will be needed
multiple times in another mode.  For example, {\tt corpus\_word(+.-)}
of Example \ref{corpus} can be efficiently evaluated in that
direction, i.e., given a sentence find the words in it.  But the
application requires multiple calls of the mode {\tt
  corpus\_word(-,+)}, i.e., given a word find the sentences that
contain it.  In this situation, bottom-up evaluation is called for.

The most basic example of this phenomenon is indexing itself.  For
example, if one wants to find the value associated with a particular
key in an unordered sequence of key-value pairs, one must search the
pairs sequentially, which on average takes time linear in the number
of pairs.  With the same time complexity to search for one item, one
can (with perfect hashing) build a hash table which allows {\em all}
such key lookups to be done in constant time.

\section{STWA for File Input}

Part of the attractiveness of logic programming is that it treats
programs (usually mostly rules) and data (usually facts) the same, but
in most large systems there is an important pragmatic difference
between them: Programs are rules (and facts) that are known statically
at compile time; data are facts (and maybe rules) that are not known
until runtime when they are retrieved from a file.  Facts are normally
accessed through a filename passed to an executing program, where the
records of the file are initially read and asserted into a predicate
in memory, which is used by the program rules.  It is not clear how
this could be made fully declarative.  But consider the following use
of STWA:

\begin{example} \label{file-example}
\begin{verbatim}
:- table_index(emp_data/4,[1+2,1]).

emp_data(FileName,EmpId,Name,Addr) :-
    open(FileName,read,InpStream),
    repeat,
    read(InpStream,Term),
    (Term == end_of_file
     -> !, close(InpStream), fail
     ;  Term = emp(EmpId,Name,Addr)
    ).
\end{verbatim}
\normalsize

\noindent This definition uses low-level procedural Prolog operations
to read the records of a file containing {\tt emp/3} facts and return
them as answers to a call to {\tt emp\_data/4}, where the first
argument of the call is the name of the file to read.  Because of the
{\tt table\_index/2} directive, the first call will be abstracted to
{\tt emp\_data(+,-,-,-)}.  (Recall that arguments in the intersection
of all the indexed arguments, here argument 1, must be bound in any
call, and will not be abstracted for the initial table-filling call.)
So with this definition, a programmer can get access to the tuples in
any employee file (one that contains {\tt emp/3} facts) by invoking a
simple subgoal, with the first argument providing the name of the
file.  There is no need for the programmer to make an explicit call
(e.g., Prolog's {\tt ensure\_loaded/1}) to initialize the {\tt
  emp\_data/4} predicate using Prolog's nonlogical {\tt assert/1}.
The first call will automatically read the file in, building an index
on the first and on the first and second arguments.  Since XSB uses
tries to represent tables \cite{Tries@JLP}, this is here just a single
trie, and the {\tt Filename} is stored only once in the table, so the
overhead of using the 4-ary predicate is essentially zero.

Of course a programmer could read different sets of employee facts
from different files simply by invoking the emp\_data goal with
different file names.
\end{example}

Generalizing this approach, we can make the entire file system a part
of the data-space of a Prolog program.  Definitions such as in Example
\ref{file-example} simply map (portions of) the file system into
Prolog fact-defined predicates.  We can make this easier for the
programmer by providing a single ``file-system'' predicate that views
the entire file system as a single relation.  For example, consider a
system predicate \cite{xsbmanual}: {\tt
  data\_records(+FileName,+FileFormat,?RecordTerm)}, which, given a
filename and a file format, reads the file, parses each line/record to
a Prolog term as specified by {\tt FileFormat}, and returns each such
term nondeterministically in {\tt RecordTerm}.

Now a Prolog programmer can use {\tt table\_index/2} and more easily
define {\tt emp\_data/4} as:

\begin{verbatim}
:- table_index(emp_data/4,[1+2,1]).
emp_data(FileName,EmpId,Name,Addr) :- 
    data_records(FileName,read,emp(EmpId,Name,Addr)).
\end{verbatim}

\noindent to cause the {\tt emp\_data/4} predicate to be loaded from the
named datafile on its first call.  With this framework, the Prolog programmer has
declarative access to all data in the file system, without having to
worry explicitly about how and when to load it into memory.

\section{Does Subsumptive Tabling with Abstraction (STWA) Really Do Bottom-Up Evaluation?}

Subsumptive tabling with abstraction has the properties of bottom-up
evaluation, described above, of generating a full table once and then
using the resulting table for subsequent queries.  But does it
actually build that table using the well-understood bottom-up,
forward-chaining, data-directed algorithm, as described in Section
\ref{tdbu-eval}?  The short answer is yes (mostly), which we now
demonstrate.

\begin{example}
Consider how XSB would compute the transitive closure of Example
\ref{bu-p2} under STWA.  The program is:

\begin{verbatim}
:- table_index(p/2,[1,0]).
                                   e(a,b).     e(b,c).
p(X,Y) :- e(X,Y).                  e(e,a).     e(c,b).
p(X,Y) :- p(X,Z),e(Z,Y).           e(d,e).
\end{verbatim}
\normalsize
Instrumented to produce output showing the engine's progress (see
\ref{progs}), we get (with annotations added for clarity):
\begin{verbatim}
| ?- p(a,A).
(Iteration 1)
enter rule 1: p(X,Y) 
from fact: e(a,b) infer p(a,b)
from fact: e(b,c) infer p(b,c)
from fact: e(e,a) infer p(e,a)
from fact: e(c,b) infer p(c,b)
from fact: e(d,e) infer p(d,e)
(Iteration 2)
enter rule 2 p(X,Y) 
from table: p(a,b) and fact e(b,c) infer p(a,c)
from table: p(b,c) and fact e(c,b) infer p(b,b)
from table: p(e,a) and fact e(a,b) infer p(e,b)
from table: p(c,b) and fact e(b,c) infer p(c,c)
from table: p(d,e) and fact e(e,a) infer p(d,a)
from table: p(a,c) and fact e(c,b) infer p(a,b)
from table: p(b,b) and fact e(b,c) infer p(b,c)
from table: p(e,b) and fact e(b,c) infer p(e,c)
from table: p(c,c) and fact e(c,b) infer p(c,b)
from table: p(d,a) and fact e(a,b) infer p(d,b)
(Iteration 3)
from table: p(e,c) and fact e(c,b) infer p(e,b)
from table: p(d,b) and fact e(b,c) infer p(d,c)
(Iteration 4)
from table: p(d,c) and fact e(c,b) infer p(d,b)
A = b;
A = c;
no
| ?- 
\end{verbatim}
\normalsize
This log closely follows that of Example \ref{bu-p2}: the ``enter
rule'' lines show that each rule (and thus the predicate) is
entered only once.  From then on everything is returned through the
table for {\tt p(X,Y)}.  The only differences are the repeated answers
in Iteration 2, where duplicate answers are computed but not added to
the table.  (Were we more detailed in our description of Example
\ref{bu-p2}, these would have appeared there, too.)
\end{example}

Intuitively, what happens here is that the initial call to {\tt
  p(a,A)} is fully abstracted to {\tt p(X,Y)} and that call is made.
The first rule returns all its answers to the table, which we have
labeled ``Iteration 1''.  Then the second rule is entered and the open
call to {\tt p(X,Z)} (the first subgoal of that second rule) is made,
found to have previously been made, and suspended, to return answers
from the table. The rest of the log, of iterations 2 through 4,
consists of answers from the table being returned to that call, joined
with facts e/2 and the resulting answers added (if new) to the table.
So computation across the bodies of rules may be suspended to wait for
new answers to show up in subsumptive tables.  This, in effect, uses
XSB's scheduling queues to get the equivalent of semi-naive bottom up
computation using pointers.  This computation directly mirrors the
bottom-up evaluation of the query {\tt p(X,Y)}.

\begin{theorem}
Assume H is a Horn program and P a predicate in H such that all
predicates in H are reachable from P in the predicate call graph of
H. (I.e., the goal predicate potentially depends on all predicates.)
Assume that for every rule body in H, there is some instance of it
that is true in the least model of H.  Then evaluation of predicate P
in H under Subsumptive Tabling With {\em Full} Abstraction (STWFA) is
equivalent to a bottom-up computation of the least model of H.
\end{theorem}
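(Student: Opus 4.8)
The plan is to identify the ``bottom-up computation of the least model'' with the standard least-fixpoint computation of the immediate consequence operator $T_H$, so that the least model is $M = \mathrm{lfp}(T_H) = \bigcup_{n\ge 0} T_H^{\,n}(\emptyset)$, computed stage by stage (semi-naive evaluation). The theorem then splits into two claims: that STWFA, started at $P$, populates for each predicate exactly its slice of $M$ (extensional equivalence), and that it does so in the same iterative order as the $T_H$ stages (procedural equivalence, as illustrated by the matched bottom-up and STWFA logs above). I would establish extensional equivalence by the usual soundness/completeness split and then read off the procedural correspondence from the completeness argument.

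The soundness direction needs none of the hypotheses: every answer placed in a subsumptive table arises from a successful tabled resolution, so every ground instance of it is a logical consequence of $H$ and hence lies in $M$, by soundness of the underlying subsumptive resolution. The substantive direction is completeness, and here I would first record the key effect of \emph{full} abstraction: every call to a predicate $q$ is subsumed by the maximally general goal $q(X_1,\dots,X_n)$, so the first such call builds the one and only table for $q$ (by exhausting $q$'s clauses) and every later call is served from it. Consequently each predicate is solved exactly once through an open call, and the answer-generating step across all rule bodies --- join the current table contents of the body literals and insert the new head instances --- is literally the action of $T_H$. Thus, once all the relevant open tables are in play, the scheduler's successive passes over newly derived answers reproduce the stages $T_H^{\,n}(\emptyset)$, which hands us the procedural correspondence for free.

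What then remains is to show that every predicate's open table is in fact built, i.e.\ that the open call to each predicate is actually issued; this is where the two structural hypotheses do their work. I would argue it along the predicate call graph, using reachability to get to every predicate and the satisfiable-body hypothesis to guarantee that left-to-right processing never stalls before reaching a later body literal. Concretely: $P$ is called open at the outset; and if a predicate $A$ has been called open, then for each clause $A \leftarrow B_1,\dots,B_k$ the assumption that this body has a true instance in $M$ forces each $B_i$'s predicate to have a non-empty relation, so processing the body cannot fail at any $B_i$ for lack of answers, and hence the open call to each $B_{i+1}$ is eventually issued. Since every predicate is reachable from $P$, every predicate is thereby called open and its full table built; combined with the $T_H$-identity above and soundness, the union of the tables equals $\mathrm{lfp}(T_H)=M$.

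The hard part will be making this last step rigorous in the presence of recursion. Because predicates may depend on one another and on themselves, the open tables are not filled in a clean sequential order but simultaneously, by the tabling fixpoint, with calls suspended and resumed as new answers appear. The induction sketched above therefore cannot be a naive well-founded recursion on table construction; it must instead be phrased as a closure property of the set of predicates that ever receive an open call, and its correctness rests on the completeness of the underlying tabled-resolution fixpoint (that the scheduler runs to saturation and resumes every suspended goal). Pinning down that the satisfiable-body hypothesis is \emph{exactly} what prevents a premature failure from pruning a still-needed open call --- which is the content of the ``(mostly)'' hedge in the preceding discussion --- is where the real care is required.
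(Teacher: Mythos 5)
Your overall skeleton matches the paper's proof sketch quite closely: both arguments set up a correspondence between the machines/suspensions created by STWFA and the firings of a bottom-up evaluator (one answer added to a subsumptive table $=$ one new fact added to the global fact set), both observe that under full abstraction each predicate's clauses are resolved against exactly once via the open call, and both use reachability plus the satisfiable-body hypothesis in the same way --- to guarantee that left-to-right body processing never stalls for lack of answers, so that the open call to every predicate is eventually issued and every ``rule process'' gets initialized. Your honest flagging of the recursion/mutual-dependence difficulty is fair; the paper's own proof is only a sketch and glosses over the same point.

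The one place where your proposal diverges, and where it would fail as stated, is the procedural half: you claim that ``the scheduler's successive passes over newly derived answers reproduce the stages $T_H^{\,n}(\emptyset)$,'' i.e.\ that STWFA performs the synchronized semi-naive iteration. That is stronger than what is true and stronger than what the theorem needs. XSB's scheduling queues do not stratify answer returns into $T_H$ iterations: a suspension can be resumed with an answer that was derived ``at depth $n+1$'' before all depth-$n$ answers have been produced, and the appendix trace shows exactly this kind of interleaving (duplicates, partial resumptions, per-suspension answer counters advancing at different rates). The paper avoids this problem by \emph{weakening the definition of bottom-up}: it defines a bottom-up evaluator as a collection of asynchronous rule processes, notes that the synchronized iteration is just one scheduling of them, and declares \emph{any} order of rule-process execution to be bottom-up; the equivalence is then ``fix a scheduling order for STWFA and find a matching scheduling order for the rule processes,'' not ``STWFA reproduces the canonical iteration.'' If you keep your identification of bottom-up with the literal $T_H^{\,n}$ stages, you would have to either prove a scheduler-fairness property that XSB does not satisfy, or settle for extensional equivalence only. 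Adopting the paper's looser notion of bottom-up is the fix, and it is also what makes the word ``equivalent'' in the theorem statement honest.
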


\begin{proof}
See \ref{app-proof}
\end{proof}

\section{Bottom-up Proofs for Propositional Horn Clauses in Linear Time}

For a deeper understanding of XSB's STWA and bottom-up evaluation, we
consider the well-known three-line meta-interpreter, applied to
propositional programs.

\begin{example}
Consider the (folded variation of the) classic meta-interpreter of
Prolog programs:

\begin{verbatim}
interp(true).
interp((A,B)) :- interp(A), interp(B).
interp(G) :- interpAtom(G).
    
interpAtom(G) :- (G <- Body), interp(Body).
\end{verbatim}
\normalsize
\end{example}

It is well-known that if we add \verb|<-/2| facts to define a
(propositional) Horn clause program and evaluate {\tt interp(Goal)}
for some {\tt Goal} in Prolog, this will carry out the top-down Prolog
evaluation of {\tt Goal} with respect to the program.  Similarly, if we
add the declaration:

\begin{verbatim}
:- table interpAtom/1. 
\end{verbatim}
\normalsize

\noindent this will carry out top-down evaluation with (variant)
tabling.

Now consider instead adding the declaration:

\begin{verbatim}
:- table_index(interpAtom/1,[1,0]).
\end{verbatim}
\normalsize

\noindent and posing the query: {\tt interpAtom(Prop)}, with {\tt
  Prop} a variable.  Recall that this declaration causes XSB to use
subsumptive tabling.  There is no abstraction necessary for this query
since it is already fully abstract, so all subsequent calls to {\tt
  interpAtom/1} will be served from the table constructed by that
initial call.  The claim is that this will cause XSB to evaluate the
propositional program in {\tt (<-)/2} using a bottom-up algorithm.  To
see this, consider how this query will be processed.  Note that since
there is only the one call to {\tt interpAtom(\_)}, there can be no
propagation of demand through (partially) instantiated arguments.  The
initial goal is {\tt interpAtom(\_)}, the open call.  So this initial
call, being subsumptively tabled, will add the call to the table and
invoke the predicate code to do clause resolution.  Now every other
call to {\tt interpAtom(P)} for any particular proposition {\tt P}
(here only the one in the third clause of {\tt interp/1}) will not
invoke the code for {\tt interpAtom/1}, but will return answers that
have previously been added to the table and will suspend waiting for a
new answer to show up in the table to then be returned.  The first
answers returned cannot require a recursive call to {\tt
  interpAtom/1}, so they can come from only the first clause of {\tt
  interp/1}, and those will be the facts of the program in {\tt
  (<-)/2}.  So the first answers to show up in the table for {\tt
  interpAtom/1} will be the program facts, just as in bottom-up
evaluation.  Then those facts can be returned to the recursive call to
{\tt interpAtom/1} (in the third clause of {\tt interp/1}.)  So rules
all of whose body atoms are facts will have their recursive calls to
{\tt interpAtom/1} satisfied (since they will be in the {\tt
  interpAtom(\_)} table on which they are waiting), and will succeed
returning the propositions in their heads, to be added to the table
for {\tt interpAtom(\_)}.  And so forth.

We can see the order that XSB generates the propositional answers to
our program above by executing the above meta-interpreter (with logging
operations, as shown in \ref{progs}):

\begin{verbatim}
p <- q,v,r,s.          s <- true.
p <- q,s,t.            u <- s,p,v,r.
q <- u,r.              u <- r,t.
q <- q,t,v.            t <- true.
r <- s.
\end{verbatim}
\normalsize

\noindent where we get (lightly edited):

\begin{verbatim}
| ?- interpAtom(p).
1.  Var demanded
2.  p <- q,v,r,s   initial clause
3.  p <- q,s,t     initial clause
4.  q <- u,r       initial clause
5.  q <- q,t,v     initial clause
6.  r <- s         initial clause
7.  s <- true      initial clause
8.  s              from true and s <- true
9.  u <- s,p,v,r   initial clause
10. u <- p,v,r     from s and u <- s,p,v,r
11. u <- r,t       initial clause
12. t <- true      initial clause
13. t              from true and t <- true
14. r              from s and r <- s
15. u <- t         from r and u <- r,t
16. u              from t and u <- t
17. q <- r         from u and q <- u,r
18. q              from r and q <- r
19. q <- t,v       from q and q <- q,t,v
20. q <- v         from t and q <- t,v
21. p <- s,t       from q and p <- q,s,t
22. p <- t         from s and p <- s,t
23. p              from t and p <- t
24. p <- v,r,s     from q and p <- q,v,r,s
25. u <- v,r       from p and u <- p,v,r
\end{verbatim}
\normalsize

Each line (from 2 on) shows a derived clause that is either an initial
program clause, or derived from a previous clause by removing the
first proposition in its body if it is proved (or is the constant
'true').  A proposition is proved if it is the head of a derived
clause with an empty body.  Each step is the result of a unit
resolution, whose participating clauses are indicated on the line.
Unit resolution generates a form of bottom-up propositional reasoning
for Horn clauses.

\vspace{0.15cm}
Performing iterative bottom-up evaluation of this program, we get:
\begin{verbatim}
  Iteration 0:
    s, t: program facts
  Iteration 1: 
    r: from r <- s
  Iteration 2:
    u: from u <- r,t
  Iteration 3:
    q: from q <- u,r
  Iteration 4:
    p: from p <- q,s,t
\end{verbatim}
\normalsize
\noindent And we see these propositions showing up in exactly this
order in the XSB STWA log above.

\section{Performance of XSB STWA for Bottom-Up Evaluation}

In this section we show XSB's performance on bottom-up evaluation of a
propositional program using a meta-interpreter.  The system used is
XSB Version 3.8.0 on a Mac Pro with an i7-4870HQ CPU running at
2.5-GHz with 16 GB of RAM and the 64-bit Windows 10 Operating System.
The program we use is a ``triangular'' program, of the form:

\begin{verbatim}
p1 :- p2,p3,p4,p5.
p2 :- p3,p4,p5.
p3 :- p4,p5.
p4 :- p5.
p5.
\end{verbatim}
\normalsize

This program is an example with 5 rules (and facts) and 15 proposition
occurrences.  The meta-interpreter is:

\begin{verbatim}
interp_goal(true) :- !.
interp_goal((G1,G2)) :- !, interp_atom(G1), interp_atom(G2).
interp_goal(G) :- interp_atom(G).

:- table interp_atom/1.
interp_atom(G) :- interp_atoms(G).

:- table_index(interp_atoms/1,[0]).
interp_atoms(G) :- (G <- Gs), interp_goal(Gs).
\end{verbatim}
\normalsize

Notice that we use {\em both} a variant table (on {\tt
  interp\_atom/1}) {\em and} a subsumptive table (on the equivalent
{\tt interp\_atoms/1}.)  This is to improve efficiency, since lookup
and retrieval from a variant table in XSB is significantly faster than for a
subsumptive table.\footnote{Without the additional variant table, the
  evaluation is about 25 times slower, showing significant room for
  improvement in XSB's implementation of subsumptive table goal
  lookup.  Evaluation of this program with only variant tabling is
  about twice as fast; not surprising since that evaluation is also
  linear, it doesn't duplicate tables, and XSB is optimized for
  variant tables.  } And in these triangular programs there are many
lookups of the same variant call; each proposition is looked up once
per occurrence in the body of a rule, so on average there are
approximately {\tt NumRules/2} lookups per proposition, so lookups
completely dominate the computation time.

Figure \ref{buperftable} shows the cpu time for bottom-up evaluation
of various-sized triangular programs.  $N$ corresponds to the
number ($*10^6$) of proposition occurrences in the program, or more
precisely, the number in the largest triangular program that has
$N*10^6$ or fewer proposition occurrences.

\begin{figure}
\includegraphics[width=12cm,height=8cm, trim=1.5cm 3.6cm 0cm 4.2cm, clip=true]{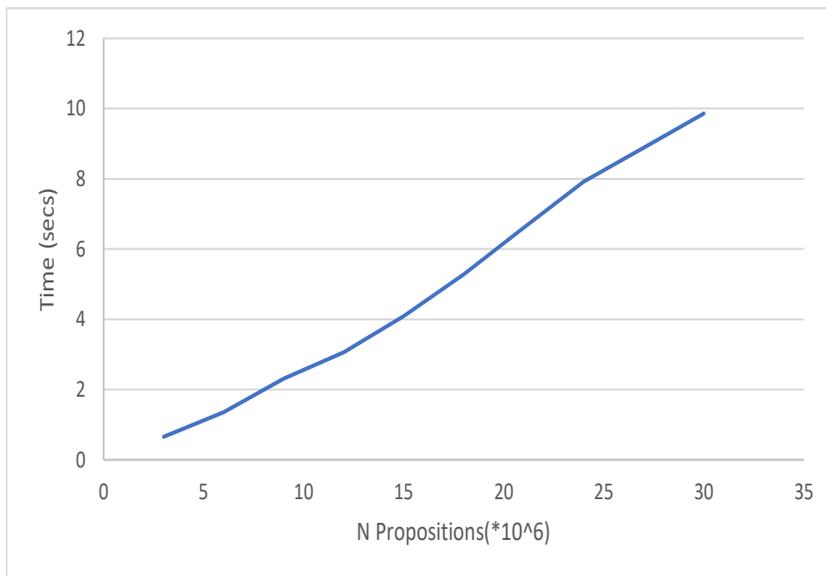}
\caption{Times for Bottom-up Evaluation of Triangular Programs}\label{buperftable}
\end{figure}

Times are just for the meta-interpretation of the triangular programs;
the time for generating the {\tt (<-)/2} facts for the programs is not
included. The graph is essentially a straight line, showing the
linearity of the algorithm.  The propositional Horn clause
satisfaction problem is known to be linear in the number of
proposition occurrences \cite{Dowling1984LinearTimeAF}, and this
meta-interpreter is indeed linear on all propositional logic programs.

Triangular programs have no loops so pure top-down evaluation without
tabling will terminate, but will be very slow due to many redundant
computations.  For example to evaluate without tabling the single
query, {\tt p1}, to a triangular program with 325 proposition
occurrences takes c. 4.1 seconds, and with 465 occurrences c. 126
seconds.  This contrasts with the 15,000,000 occurrences in the
program that is computed by tabled bottom-up evaluation in 4.1 seconds
(as shown in the figure).

\section{Conclusion}
In this paper we have described the integration of bottom-up
evaluation with top-down evaluation in tabled Prolog by means of
Subsumptive Tabling With Abstraction (STWA).  We provided a procedural
interpretation for bottom-up evaluation, in the context of Prolog
programming, as efficient indexed data structure initialization.  This
allows many formerly non-declarative Prolog programs that contain
assert/1 to be naturally expressed as declarative programs with no
sacrifice in efficiency.

We have shown how XSB provides a natural model for evaluating Horn
programs, based on nondeterministic procedural interpretation using
subsumptive tabling with abstraction, that incorporates both top-down
and bottom-up computation in a single uniform framework.  And this
model provides linear evaluation of all queries to proposition Horn
programs for all intermixings of demand and data driven computation.

%
\bibliographystyle{acmtrans}
\bibliography{tdbureconciled}

\appendix
\section{Programs to Generate Traces} \label{progs}
Program that generates trace of bottom-up evaluation of {\tt ?- p(a,X).}
\begin{verbatim}
:- table_index(p/2,[1,0]).
p(X,Y) :-
    writeln('enter rule 1: p'(X,Y)),
    e(X,Y),
    write('from fact: e'(X,Y)),
    write(' infer p'(X,Y)),
p(X,Y) :-
    writeln('enter rule 2 p'(X,Y)),
    p(X,Z),
    write('from table: p'(X,Z)),
    e(Z,Y),
    write(' and fact e'(Z,Y)),
    write(' infer p'(X,Y)).            

e(a,b).     e(b,c).
e(e,a).     e(c,b).
e(d,e).
\end{verbatim}
\normalsize

Program that generates the annotated trace of bottom-up
meta-interpretation of proposition clauses.
\begin{verbatim}
:- op(1200,xfx,('<-')).
:- import conset/2, coninc/2 from gensym.

:- table_index(interpAtom/1,[0]).
interpAtom(G) :-
    conset('_ctr',0),
    log('%S demanded',args(G)),
    (G <- Body),
    log('%S :- %S initial clause',args(G,Body)),
    interp(Body,G).

interp(true,H) :- !, log('%S from true and %S :- true',args(H,H)).
interp((A,B),H) :- !,
    interpAtom(A),
    log('%S :- %S from %S and %S :- %S',args(H,B,A,H,(A,B))),
    interp(B,H).
interp(G,H) :-
    interpAtom(G),
    log('%S from %S and %S :- %S',args(H,G,H,G)).

p <- q,v,r,s.          s <- true.
p <- q,s,t.            u <- s,p,v,r.
q <- u,r.              u <- r,t.
q <- q,t,v.            t <- true.
r <- s.
\end{verbatim}
\normalsize

\section{Subsumptive Tabling with Full Abstraction (STWFA) is Bottom-up}
\label{app-proof}
This appendix proves a close relationship between the evaluation of
Horn programs using subsumptive tabling with full abstraction and
their bottom-up evaluation.

\subsection{Multiple Machine Model of Tabled Horn Clause Evaluation}
We describe a model for STWFA evaluation as a set of (virtual)
machines.  Given a Horn clause program and a goal, each machine
carries out the evaluation of the goal in the program along a
different deterministic path.  Without tabling there would be a
machine corresponding to each root-to-leaf path of the SLD tree for
the query.  With subsumptive tabling with full abstraction, the model
is as described next.

A global table of subgoals is maintained.  Associated with each
subgoal is a set of answers corresponding to proven instances of the
subgoal.  This table is maintained and used by all the executing
machines.  It is monotonic in that goals and answers are only added to
the table and never deleted.

Computation starts with a single machine that is given the initial
goal to evaluate.  When a machine encounters a goal, it looks to see
if the corresponding most-general goal is in the global table.  If it
is {\em not} in the table, a) it adds it (with an empty list of
associated answers), b) for each clause for that most-general goal, it
forks off a duplicate of itself to process it, and c) remembering the
goal it encountered, it suspends on the table entry it just added (to
later process answers that are associated with that most-general
goal).  If the most-general goal already {\em is} in the table, the
machine (remembering its encountered goal) suspends on that table
entry.

When a machine encounters a failure, it simply disappears.  When a
machine returns an answer to a goal (i.e., completes execution to the
end of a clause for that goal), it adds the computed goal instance to
the end of the list of answers associated with that goal in the table,
and then disappears.

Whenever there is a new answer for a table entry for a goal, each machine
that is suspended on that entry, whose associated goal unifies with
that new answer, forks off a copy of itself.  This new machine uses
that answer to update its state and then continues executing.  The
suspended machine remains suspended having marked in the table that
that answer has been returned.

This set of machines is normally simulated by a sequential emulator,
which includes a scheduler that determines which of the next possible
machine operations will be performed.  At any point in the execution,
there may be a machine ready to a) evaluate the next goal of some
clause (or return an answer if there are no more goals for that
clause) or b) fork a machine that is suspended on a table entry to
process a new (unifying) answer.  For specificity and simplicity, we
will assume that answers associated with a table entry are maintained
in the order they are generated, and suspended machines return them in
that same order.

\subsection{Example of STWFA Evaluation}

To better understand this model, we trace its evaluation of the graph
reachability example, with the edge relation defined by a simple join:
\begin{verbatim}
:- table_index(p/2,[0]).
                             q(a,b).           r(a).
p(X,Y) :- e(X,Y).            q(a,d).           r(b).
p(X,Y) :- p(X,Z),e(Z,Y).     q(e,a).           r(c).
                             q(d,e). q(b,d).   r(e).
e(X,Y) :- q(X,Y),r(Y).       q(b,c). q(c,b).
\end{verbatim}
\normalsize

We trace the evaluation of the query p(a,X).  A state of a machine is
represented by a rule form: $H\leftarrow B_1,B_2,...,B_k$, where $k>=0$.
This represents a machine about to call subgoal $B_1$, which on its
return will be followed by calls to $B_2$ through $B_k$, and then a
return to $H$.  When $k=0$, it is about to return the answer H to its
call.

The table is represented by a set of forms: $Goal:[AnsList],[H\leftarrow
  B_1,B_2,...,B_k:N,...]$ where $Goal$ is a most-general goal,
$AnsList$ is a list of instances of Goal (a.k.a. answers). The indexed
rule form $H\leftarrow B_1,B_2,...,B_k:N$ represents a suspension:
$B_1$ is an instance of $Goal$, and $N$ is an integer between $0$ and
the length of the answer list, and is the index of the last answer
from $B_1$ that has been returned, i.e., for which the suspension has
been cloned and continued.  This pair of a rule and index represents a
machine that is suspended on the goal of this table entry; there may
be multiple such suspensions.

In the following trace, \verb|S:| indicates the set of machine states
that are available to be scheduled, separated by semi-colons and
\verb|T:| indicates the state of the table.  We trace the evaluation
of the query \verb|p(a,X)|.

\begin{verbatim}
S: <-p(a,X)
T:

Create new table entry for p/2, suspend on it, fork machines for each rule:
S: p(X,Y)<-e(X,Y); p(X,Y)<-p(X,Z),e(Z,Y)
T: p(X,Y):[],[<-p(a,X):0]

Create new table entry for e/2, suspend on it, fork machines for each rule:
S: e(X,Y)<-q(X,Y),r(Y); p(X,Y)<-p(X,Z),e(Z,Y)
T: p(X,Y):[],[<-p(a,X):0]
   e(X,Y):[],[p(X,Y)<-e(X,Y):0]

Create new table entry for q/2, suspend, fork machines for each rule:
S: q(a,b)<-; q(a,d)<-; q(e,a)<-; q(d,e)<-; q(b,c)<-; q(b,d)<-; q(c,b)<-;
   p(X,Y)<-p(X,Z),e(Z,Y)
T: p(X,Y):[],[<-p(a,X):0]
   e(X,Y):[],[p(X,Y)<-e(X,Y):0]
   q(X,Y):[],[e(X,Y)<-q(X,Y),r(Y):0]

7 steps, returning q/2 facts to the q(X,Y), adding them to its table:
S: p(X,Y)<-p(X,Z),e(Z,Y)
T: p(X,Y):[],[<-p(a,X):0]
   e(X,Y):[],[p(X,Y)<-e(X,Y):0]
   q(X,Y):[q(a,b),q(a,d),q(e,a),q(d,e),q(b,c),q(b,d),q(c,b)],
          [e(X,Y)<-q(X,Y),r(Y):0]

Fork off 7 copies of the suspended e<-q,r machine, one for each answer:
S: e(a,b)<-r(b); e(a,d)<-r(d); e(e,a)<-r(a); e(d,e)<-r(e); e(b,c)<-r(c);
   e(b,d)<-r(d); e(c,b)<-r(b); p(X,Y)<-p(X,Z),e(Z,Y)
T: p(X,Y):[],[<-p(a,X):0]
   e(X,Y):[],[p(X,Y)<-e(X,Y):0]
   q(X,Y):[q(a,b),q(a,d),q(e,a),q(d,e),q(b,c),q(b,d),q(c,b)],
          [e(X,Y)<-q(X,Y),r(Y):7]

Create new table entry for r/1, suspend on it, fork one for each r/1 rule:
S: r(a)<-; r(b)<-; r(c)<-; r(e)<-;
   e(a,d)<-r(d); e(e,a)<-r(a); e(d,e)<-r(e); e(b,c)<-r(c); e(b,d)<-r(d)
   e(c,b)<-r(b); p(X,Y)<-p(X,Z),e(Z,Y)
T: p(X,Y):[],[<-p(a,X):0]
   e(X,Y):[],[p(X,Y)<-e(X,Y):0]a
   q(X,Y):[q(a,b),q(a,d),q(e,a),q(d,e),q(b,c),q(b,d),q(c,b)],
          [e(X,Y)<-q(X,Y),r(Y):7]
   r(X):[],[e(a,b)<-r(b):0]

4 steps, returning 4 answers to r/1 call:
S: e(a,d)<-r(d); e(e,a)<-r(a); e(d,e)<-r(e); e(b,c)<-r(c); e(b,d)<-r(d);
   e(c,b)<-r(b); p(X,Y)<-p(X,Z),e(Z,Y)
T: p(X,Y):[],[<-p(a,X):0]
   e(X,Y):[],[p(X,Y)<-e(X,Y):0]
   q(X,Y):[q(a,b),q(a,d),q(e,a),q(d,e),q(b,c),q(b,d),q(c,b)],
          [e(X,Y)<-q(X,Y),r(Y):7]
   r(X):[r(a),r(b),r(c),r(e)],[e(a,b)<-r(b):0]

7 steps each calling r/1 and suspending on that call:
S: p(X,Y)<-p(X,Z),e(Z,Y)
T: p(X,Y):[],[<-p(a,X):0]
   e(X,Y):[],[p(X,Y)<-e(X,Y):0]
   q(X,Y):[q(a,b),q(a,d),q(e,a),q(d,e),q(b,c),q(b,d),q(c,b)],
          [e(X,Y)<-q(X,Y),r(Y):7]
   r(X):[r(a),r(b),r(c),r(e)],
        [e(a,b)<-r(b):0, e(a,d)<-r(d):0, e(e,a)<-r(a):0,
         e(d,e)<-r(e):0, e(b,c)<-r(c):0, e(b,d)<-r(d):0,
         e(c,b)<-r(b):0]

7 (*4) steps returning r/1 answers to suspended machine.  5 of the
  suspended machines match on answer and 2 match no answers, generating
  5 new machines:
S: e(a,b)<-; e(e,a)<-; e(d,e)<-; e(b,c)<-; e(c,b)<-; 
p(X,Y)<-p(X,Z),e(Z,Y)
T: p(X,Y):[],[<-p(a,X):0]
   e(X,Y):[],[p(X,Y)<-e(X,Y):0]
   q(X,Y):[q(a,b),q(a,d),q(e,a),q(d,e),q(b,c),q(b,d),q(c,b)],
          [e(X,Y)<-q(X,Y),r(Y):7]
   r(X):[r(a),r(b),r(c),r(e)],
        [e(a,b)<-r(b):4, e(a,d)<-r(d):4, e(e,a)<-r(a):4,
         e(d,e)<-r(e):4, e(b,c)<-r(c):4, e(b,d)<-r(d):4,
         e(c,b)<-r(b):4]

5 steps returning 5 answers to the e/2 table:
S: p(X,Y)<-p(X,Z),e(Z,Y)
T: p(X,Y):[],[<-p(a,X):0]
   e(X,Y):[e(a,b),e(e,a),e(d,e),e(b,c),e(c,b)],[p(X,Y)<-e(X,Y):0]
   q(X,Y):[q(a,b),q(a,d),q(e,a),q(d,e),q(b,c),q(b,d),q(c,b)],
          [e(X,Y)<-q(X,Y),r(Y):7]
   r(X):[r(a),r(b),r(c),r(e)],
        [e(a,b)<-r(b):4, e(a,d)<-r(d):4, e(e,a)<-r(a):4,
         e(d,e)<-r(e):4, e(b,c)<-r(c):4, e(b,d)<-r(d):4,
         e(c,b)<-r(b):4]

The machine suspended on e/2 forks 5 new machines, one for each new answer:
S: p(a,b)<-; p(e,a)<- ; p(d,e)<-; p(b,c)<-; p(c,b)<-; p(X,Y)<-p(X,Z),e(Z,Y)
T: p(X,Y):[],[<-p(a,X):0]
   e(X,Y):[e(a,b),e(e,a),e(d,e),e(b,c),e(c,b)],[p(X,Y)<-e(X,Y):5]
   q(X,Y):[q(a,b),q(a,d),q(e,a),q(d,e),q(b,c),q(b,d),q(c,b)],
          [e(X,Y)<-q(X,Y),r(Y):7]
   r(X):[r(a),r(b),r(c),r(e)],
        [e(a,b)<-r(b):4, e(a,d)<-r(d):4, e(e,a)<-r(a):4,
         e(d,e)<-r(e):4, e(b,c)<-r(c):4, e(b,d)<-r(d):4,
         e(c,b)<-r(b):4]

5 steps, adding 5 new answers to p/2 entry:
S: p(X,Y)<-p(X,Z),e(Z,Y)
T: p(X,Y):[p(a,b),p(e,a),p(d,e),p(b,c),p(c,b)],[<-p(a,X):0]
   e(X,Y):[e(a,b),e(e,a),e(d,e),e(b,c),e(c,b)],[p(X,Y)<-e(X,Y):5]
   q(X,Y):[q(a,b),q(a,d),q(e,a),q(d,e),q(b,c),q(b,d),q(c,b)],
          [e(X,Y)<-q(X,Y),r(Y):7]
   r(X):[r(a),r(b),r(c),r(e)],
        [e(a,b)<-r(b):4, e(a,d)<-r(d):4, e(e,a)<-r(a):4,
         e(d,e)<-r(e):4, e(b,c)<-r(c):4, e(b,d)<-r(d):4,
         e(c,b)<-r(b):4]

Machine suspends on call to p(X,Z):
S: 
T: p(X,Y):[p(a,b),p(e,a),p(d,e),p(b,c),p(c,b)],[<-p(a,X):0,p(X,Y)<-p(X,Z),e(Z,Y):0]
   e(X,Y):[e(a,b),e(e,a),e(d,e),e(b,c),e(c,b)],[p(X,Y)<-e(X,Y):5]
   q(X,Y):[q(a,b),q(a,d),q(e,a),q(d,e),q(b,c),q(b,d),q(c,b)],
          [e(X,Y)<-q(X,Y),r(Y):7]
   r(X):[r(a),r(b),r(c),r(e)],
        [e(a,b)<-r(b):4, e(a,d)<-r(d):4, e(e,a)<-r(a):4,
         e(d,e)<-r(e):4, e(b,c)<-r(c):4, e(b,d)<-r(d):4,
         e(c,b)<-r(b):4]

5 new machines forked from machine suspended on p/2:
S: p(a,Y)<-e(b,Y); p(e,Y)<-e(a,Y); p(d,Y)<-e(e,Y); p(b,Y)<-e(c,Y); p(c,Y)<-e(b,Y)        
T: p(X,Y):[p(a,b),p(e,a),p(d,e),p(b,c),p(c,b)],[<-p(a,X):0,p(X,Y)<-p(X,Z),e(Z,Y):5]
   e(X,Y):[e(a,b),e(e,a),e(d,e),e(b,c),e(c,b)],[p(X,Y)<-e(X,Y):5]
   q(X,Y):[q(a,b),q(a,d),q(e,a),q(d,e),q(b,c),q(b,d),q(c,b)],
          [e(X,Y)<-q(X,Y),r(Y):7]
   r(X):[r(a),r(b),r(c),r(e)],
        [e(a,b)<-r(b):4, e(a,d)<-r(d):4, e(e,a)<-r(a):4,
         e(d,e)<-r(e):4, e(b,c)<-r(c):4, e(b,d)<-r(d):4,
         e(c,b)<-r(b):4]

All 5 just generated machines suspend on their calls to e/2:
S: 
T: p(X,Y):[p(a,b),p(e,a),p(d,e),p(b,c),p(c,b)],[<-p(a,X):0,p(X,Y)<-p(X,Z),e(Z,Y):5]
   e(X,Y):[e(a,b),e(e,a),e(d,e),e(b,c),e(c,b)],
          [p(X,Y)<-e(X,Y):5, p(a,Y)<-e(b,Y):0, p(e,Y)<-e(a,Y):0,
           p(d,Y)<-e(e,Y):0, p(b,Y)<-e(c,Y):0, p(c,Y)<-e(b,Y):0]
   q(X,Y):[q(a,b),q(a,d),q(e,a),q(d,e),q(b,c),q(b,d),q(c,b)],
          [e(X,Y)<-q(X,Y),r(Y):7]
   r(X):[r(a),r(b),r(c),r(e)],
        [e(a,b)<-r(b):4, e(a,d)<-r(d):4, e(e,a)<-r(a):4,
         e(d,e)<-r(e):4, e(b,c)<-r(c):4, e(b,d)<-r(d):4,
         e(c,b)<-r(b):4]

The 5 just suspended machines fork of 5 new machines; each machine
  matching exactly one of the 5 new answers:
S: p(a,c)<-; p(e,b)<-; p(d,a)<-; p(b,b)<-; p(c,c)<-
T: p(X,Y):[p(a,b),p(e,a),p(d,e),p(b,c),p(c,b)],[<-p(a,X):0,p(X,Y)<-p(X,Z),e(Z,Y):5]
   e(X,Y):[e(a,b),e(e,a),e(d,e),e(b,c),e(c,b)],
          [p(X,Y)<-e(X,Y):5, p(a,Y)<-e(b,Y):5, p(e,Y)<-e(a,Y):5,
           p(d,Y)<-e(e,Y):5, p(b,Y)<-e(c,Y):5, p(c,Y)<-e(b,Y):5]
   q(X,Y):[q(a,b),q(a,d),q(e,a),q(d,e),q(b,c),q(b,d),q(c,b)],
          [e(X,Y)<-q(X,Y),r(Y):7]
   r(X):[r(a),r(b),r(c),r(e)],
        [e(a,b)<-r(b):4, e(a,d)<-r(d):4, e(e,a)<-r(a):4,
         e(d,e)<-r(e):4, e(b,c)<-r(c):4, e(b,d)<-r(d):4,
         e(c,b)<-r(b):4]

Each of the 5 new machines add a new answer to the p/2 table entry:
S:
T: p(X,Y):[p(a,b),p(e,a),p(d,e),p(b,c),p(c,b),p(a,c),p(e,b),p(d,a),p(b,b),p(c,c)],
             [<-p(a,X):0,p(X,Y)<-p(X,Z),e(Z,Y):5]
   e(X,Y):[e(a,b),e(e,a),e(d,e),e(b,c),e(c,b)],
          [p(X,Y)<-e(X,Y):5, p(a,Y)<-e(b,Y):5, p(e,Y)<-e(a,Y):5,
           p(d,Y)<-e(e,Y):5, p(b,Y)<-e(c,Y):5, p(c,Y)<-e(b,Y):5]
   q(X,Y):[q(a,b),q(a,d),q(e,a),q(d,e),q(b,c),q(b,d),q(c,b)],
          [e(X,Y)<-q(X,Y),r(Y):7]
   r(X):[r(a),r(b),r(c),r(e)],
        [e(a,b)<-r(b):4, e(a,d)<-r(d):4, e(e,a)<-r(a):4,
         e(d,e)<-r(e):4, e(b,c)<-r(c):4, e(b,d)<-r(d):4,
         e(c,b)<-r(b):4]

The machine suspended on p(X,Z) in the p/2 table forks off machines
  for each of the 5 just-add p/2 answers:
S: p(a,Y)<-e(c,Y); p(e,Y)<-e(b,Y); p(d,Y)<-e(a,Y); p(b,Y)<-e(b,Y); p(c,Y)<-e(c,Y)
T: p(X,Y):[p(a,b),p(e,a),p(d,e),p(b,c),p(c,b),p(a,c),p(e,b),p(d,a),p(b,b),p(c,c)],
             [<-p(a,X):0,p(X,Y)<-p(X,Z),e(Z,Y):10]
   e(X,Y):[e(a,b),e(e,a),e(d,e),e(b,c),e(c,b)],
          [p(X,Y)<-e(X,Y):5, p(a,Y)<-e(b,Y):5, p(e,Y)<-e(a,Y):5,
           p(d,Y)<-e(e,Y):5, p(b,Y)<-e(c,Y):5, p(c,Y)<-e(b,Y):5]
   q(X,Y):[q(a,b),q(a,d),q(e,a),q(d,e),q(b,c),q(b,d),q(c,b)],
          [e(X,Y)<-q(X,Y),r(Y):7]
   r(X):[r(a),r(b),r(c),r(e)],
        [e(a,b)<-r(b):4, e(a,d)<-r(d):4, e(e,a)<-r(a):4,
         e(d,e)<-r(e):4, e(b,c)<-r(c):4, e(b,d)<-r(d):4,
         e(c,b)<-r(b):4]

All 5 of those machines suspend on their calls to e/2.
S: 
T: p(X,Y):[p(a,b),p(e,a),p(d,e),p(b,c),p(c,b),p(a,c),p(e,b),p(d,a),p(b,b),p(c,c)],
             [<-p(a,X):0,p(X,Y)<-p(X,Z),e(Z,Y):10]
   e(X,Y):[e(a,b),e(e,a),e(d,e),e(b,c),e(c,b)],
          [p(X,Y)<-e(X,Y):5, p(a,Y)<-e(b,Y):5, p(e,Y)<-e(a,Y):5,
           p(d,Y)<-e(e,Y):5, p(b,Y)<-e(c,Y):5, p(c,Y)<-e(b,Y):5,
           p(a,Y)<-e(c,Y):0, p(e,Y)<-e(b,Y):0, p(d,Y)<-e(a,Y):0,
           p(b,Y)<-e(b,Y):0,  p(c,Y)<-e(c,Y):0]
   q(X,Y):[q(a,b),q(a,d),q(e,a),q(d,e),q(b,c),q(b,d),q(c,b)],
          [e(X,Y)<-q(X,Y),r(Y):7]
   r(X):[r(a),r(b),r(c),r(e)],
        [e(a,b)<-r(b):4, e(a,d)<-r(d):4, e(e,a)<-r(a):4,
         e(d,e)<-r(e):4, e(b,c)<-r(c):4, e(b,d)<-r(d):4,
         e(c,b)<-r(b):4]

All 5 of those suspensions fork off a total of 5 new machines, each
  one matching one of the 5 existing answers for e/2:
S: p(a,b)<-; p(e,c)<-; p(d,b)<-; p(b,c)<-; p(c,b)<-       
T: p(X,Y):[p(a,b),p(e,a),p(d,e),p(b,c),p(c,b),p(a,c),p(e,b),p(d,a),p(b,b),p(c,c)],
             [<-p(a,X):0,p(X,Y)<-p(X,Z),e(Z,Y):10]
   e(X,Y):[e(a,b),e(e,a),e(d,e),e(b,c),e(c,b)],
          [p(X,Y)<-e(X,Y):5, p(a,Y)<-e(b,Y):5, p(e,Y)<-e(a,Y):5,
           p(d,Y)<-e(e,Y):5, p(b,Y)<-e(c,Y):5, p(c,Y)<-e(b,Y):5,
           p(a,Y)<-e(c,Y):5, p(e,Y)<-e(b,Y):5, p(d,Y)<-e(a,Y):5,
           p(b,Y)<-e(b,Y):5, p(c,Y)<-e(c,Y):5]
   q(X,Y):[q(a,b),q(a,d),q(e,a),q(d,e),q(b,c),q(b,d),q(c,b)],
          [e(X,Y)<-q(X,Y),r(Y):7]
   r(X):[r(a),r(b),r(c),r(e)],
        [e(a,b)<-r(b):4, e(a,d)<-r(d):4, e(e,a)<-r(a):4,
         e(d,e)<-r(e):4, e(b,c)<-r(c):4, e(b,d)<-r(d):4,
         e(c,b)<-r(b):4]

Those 5 machines return just 2 new answers to the p/2 entry (3 are duplicates):
S:
T: p(X,Y):[p(a,b),p(e,a),p(d,e),p(b,c),p(c,b),p(a,c),p(e,b),p(d,a),p(b,b),p(c,c),
              p(e,c),p(d,b)],
             [<-p(a,X):0,p(X,Y)<-p(X,Z),e(Z,Y):10]
   e(X,Y):[e(a,b),e(e,a),e(d,e),e(b,c),e(c,b)],
          [p(X,Y)<-e(X,Y):5, p(a,Y)<-e(b,Y):5, p(e,Y)<-e(a,Y):5,
           p(d,Y)<-e(e,Y):5, p(b,Y)<-e(c,Y):5, p(c,Y)<-e(b,Y):5,
           p(a,Y)<-e(c,Y):5, p(e,Y)<-e(b,Y):5, p(d,Y)<-e(a,Y):5,
           p(b,Y)<-e(b,Y):5, p(c,Y)<-e(c,Y):5]
   q(X,Y):[q(a,b),q(a,d),q(e,a),q(d,e),q(b,c),q(b,d),q(c,b)],
          [e(X,Y)<-q(X,Y),r(Y):7]
   r(X):[r(a),r(b),r(c),r(e)],
        [e(a,b)<-r(b):4, e(a,d)<-r(d):4, e(e,a)<-r(a):4,
         e(d,e)<-r(e):4, e(b,c)<-r(c):4, e(b,d)<-r(d):4,
         e(c,b)<-r(b):4]

The suspend p<-p,e machine forks 2 new machines, one for each just-added answer;
S: p(e,Y)<-e(c,Y); p(d,Y)<-e(b,Y)
T: p(X,Y):[p(a,b),p(e,a),p(d,e),p(b,c),p(c,b),p(a,c),p(e,b),p(d,a),p(b,b),p(c,c),
              p(e,c),p(d,b)],
             [<-p(a,X):0,p(X,Y)<-p(X,Z),e(Z,Y):12]
   e(X,Y):[e(a,b),e(e,a),e(d,e),e(b,c),e(c,b)],
          [p(X,Y)<-e(X,Y):5, p(a,Y)<-e(b,Y):5, p(e,Y)<-e(a,Y):5,
           p(d,Y)<-e(e,Y):5, p(b,Y)<-e(c,Y):5, p(c,Y)<-e(b,Y):5,
           p(a,Y)<-e(c,Y):5, p(e,Y)<-e(b,Y):5, p(d,Y)<-e(a,Y):5,
           p(b,Y)<-e(b,Y):5, p(c,Y)<-e(c,Y):5]
   q(X,Y):[q(a,b),q(a,d),q(e,a),q(d,e),q(b,c),q(b,d),q(c,b)],
          [e(X,Y)<-q(X,Y),r(Y):7]
   r(X):[r(a),r(b),r(c),r(e)],
        [e(a,b)<-r(b):4, e(a,d)<-r(d):4, e(e,a)<-r(a):4,
         e(d,e)<-r(e):4, e(b,c)<-r(c):4, e(b,d)<-r(d):4,
         e(c,b)<-r(b):4]

Those 2 new machines suspend on their e/2 calls:
S: 
T: p(X,Y):[p(a,b),p(e,a),p(d,e),p(b,c),p(c,b),p(a,c),p(e,b),p(d,a),p(b,b),p(c,c),
              p(e,c),p(d,b)],
             [<-p(a,X):0,p(X,Y)<-p(X,Z),e(Z,Y):12]
   e(X,Y):[e(a,b),e(e,a),e(d,e),e(b,c),e(c,b)],
          [p(X,Y)<-e(X,Y):5, p(a,Y)<-e(b,Y):5, p(e,Y)<-e(a,Y):5,
           p(d,Y)<-e(e,Y):5, p(b,Y)<-e(c,Y):5, p(c,Y)<-e(b,Y):5,
           p(a,Y)<-e(c,Y):5, p(e,Y)<-e(b,Y):5, p(d,Y)<-e(a,Y):5,
           p(b,Y)<-e(b,Y):5, p(c,Y)<-e(c,Y):5, p(e,Y)<-e(c,Y):0
           p(d,Y)<-e(b,Y):0]
   q(X,Y):[q(a,b),q(a,d),q(e,a),q(d,e),q(b,c),q(b,d),q(c,b)],
          [e(X,Y)<-q(X,Y),r(Y):7]
   r(X):[r(a),r(b),r(c),r(e)],
        [e(a,b)<-r(b):4, e(a,d)<-r(d):4, e(e,a)<-r(a):4,
         e(d,e)<-r(e):4, e(b,c)<-r(c):4, e(b,d)<-r(d):4,
         e(c,b)<-r(b):4]

Those 2 suspensions fork 2 new machines from all the e/2 answers,
  each matching exactly one:
S: p(e,b)<-; p(d,c)<-
T: p(X,Y):[p(a,b),p(e,a),p(d,e),p(b,c),p(c,b),p(a,c),p(e,b),p(d,a),p(b,b),p(c,c),
              p(e,c),p(d,b)],
             [<-p(a,X):0,p(X,Y)<-p(X,Z),e(Z,Y):12]
   e(X,Y):[e(a,b),e(e,a),e(d,e),e(b,c),e(c,b)],
          [p(X,Y)<-e(X,Y):5, p(a,Y)<-e(b,Y):5, p(e,Y)<-e(a,Y):5,
           p(d,Y)<-e(e,Y):5, p(b,Y)<-e(c,Y):5, p(c,Y)<-e(b,Y):5,
           p(a,Y)<-e(c,Y):5, p(e,Y)<-e(b,Y):5, p(d,Y)<-e(a,Y):5,
           p(b,Y)<-e(b,Y):5, p(c,Y)<-e(c,Y):5, p(e,Y)<-e(c,Y):5,
           p(d,Y)<-e(b,Y):5]
   q(X,Y):[q(a,b),q(a,d),q(e,a),q(d,e),q(b,c),q(b,d),q(c,b)],
          [e(X,Y)<-q(X,Y),r(Y):7]
   r(X):[r(a),r(b),r(c),r(e)],
        [e(a,b)<-r(b):4, e(a,d)<-r(d):4, e(e,a)<-r(a):4,
         e(d,e)<-r(e):4, e(b,c)<-r(c):4, e(b,d)<-r(d):4,
         e(c,b)<-r(b):4]

These two machines add just one new answer to the p/2 table entry:
S:
T: p(X,Y):[p(a,b),p(e,a),p(d,e),p(b,c),p(c,b),p(a,c),p(e,b),p(d,a),p(b,b),p(c,c),
              p(e,c),p(d,b),p(d,c)],
             [<-p(a,X):0,p(X,Y)<-p(X,Z),e(Z,Y):12]
   e(X,Y):[e(a,b),e(e,a),e(d,e),e(b,c),e(c,b)],
          [p(X,Y)<-e(X,Y):5, p(a,Y)<-e(b,Y):5, p(e,Y)<-e(a,Y):5,
           p(d,Y)<-e(e,Y):5, p(b,Y)<-e(c,Y):5, p(c,Y)<-e(b,Y):5,
           p(a,Y)<-e(c,Y):5, p(e,Y)<-e(b,Y):5, p(d,Y)<-e(a,Y):5,
           p(b,Y)<-e(b,Y):5, p(c,Y)<-e(c,Y):5, p(e,Y)<-e(c,Y):5,
           p(d,Y)<-e(b,Y):5]
   q(X,Y):[q(a,b),q(a,d),q(e,a),q(d,e),q(b,c),q(b,d),q(c,b)],
          [e(X,Y)<-q(X,Y),r(Y):7]
   r(X):[r(a),r(b),r(c),r(e)],
        [e(a,b)<-r(b):4, e(a,d)<-r(d):4, e(e,a)<-r(a):4,
         e(d,e)<-r(e):4, e(b,c)<-r(c):4, e(b,d)<-r(d):4,
         e(c,b)<-r(b):4]

The p<-p,e suspension on p/2 forks off a new machine for that new answer:
S: p(d,Y)<-e(c,Y)
T: p(X,Y):[p(a,b),p(e,a),p(d,e),p(b,c),p(c,b),p(a,c),p(e,b),p(d,a),p(b,b),p(c,c),
              p(e,c),p(d,b),p(d,c)],
             [<-p(a,X):0,p(X,Y)<-p(X,Z),e(Z,Y):13]
   e(X,Y):[e(a,b),e(e,a),e(d,e),e(b,c),e(c,b)],
          [p(X,Y)<-e(X,Y):5, p(a,Y)<-e(b,Y):5, p(e,Y)<-e(a,Y):5,
           p(d,Y)<-e(e,Y):5, p(b,Y)<-e(c,Y):5, p(c,Y)<-e(b,Y):5,
           p(a,Y)<-e(c,Y):5, p(e,Y)<-e(b,Y):5, p(d,Y)<-e(a,Y):5,
           p(b,Y)<-e(b,Y):5, p(c,Y)<-e(c,Y):5, p(e,Y)<-e(c,Y):5,
           p(d,Y)<-e(b,Y):5]
   q(X,Y):[q(a,b),q(a,d),q(e,a),q(d,e),q(b,c),q(b,d),q(c,b)],
          [e(X,Y)<-q(X,Y),r(Y):7]
   r(X):[r(a),r(b),r(c),r(e)],
        [e(a,b)<-r(b):4, e(a,d)<-r(d):4, e(e,a)<-r(a):4,
         e(d,e)<-r(e):4, e(b,c)<-r(c):4, e(b,d)<-r(d):4,
         e(c,b)<-r(b):4]

That machine calls e/2 and suspends:
S: 
T: p(X,Y):[p(a,b),p(e,a),p(d,e),p(b,c),p(c,b),p(a,c),p(e,b),p(d,a),p(b,b),p(c,c),
              p(e,c),p(d,b),p(d,c)],
             [<-p(a,X):0,p(X,Y)<-p(X,Z),e(Z,Y):13]
   e(X,Y):[e(a,b),e(e,a),e(d,e),e(b,c),e(c,b)],
          [p(X,Y)<-e(X,Y):5, p(a,Y)<-e(b,Y):5, p(e,Y)<-e(a,Y):5,
           p(d,Y)<-e(e,Y):5, p(b,Y)<-e(c,Y):5, p(c,Y)<-e(b,Y):5,
           p(a,Y)<-e(c,Y):5, p(e,Y)<-e(b,Y):5, p(d,Y)<-e(a,Y):5,
           p(b,Y)<-e(b,Y):5, p(c,Y)<-e(c,Y):5, p(e,Y)<-e(c,Y):5,
           p(d,Y)<-e(b,Y):5, p(d,Y)<-e(c,Y):0]
   q(X,Y):[q(a,b),q(a,d),q(e,a),q(d,e),q(b,c),q(b,d),q(c,b)],
          [e(X,Y)<-q(X,Y),r(Y):7]
   r(X):[r(a),r(b),r(c),r(e)],
        [e(a,b)<-r(b):4, e(a,d)<-r(d):4, e(e,a)<-r(a):4,
         e(d,e)<-r(e):4, e(b,c)<-r(c):4, e(b,d)<-r(d):4,
         e(c,b)<-r(b):4]

That suspension forks one machine for the one matching answer in e/2:
S: p(d,b)<-
T: p(X,Y):[p(a,b),p(e,a),p(d,e),p(b,c),p(c,b),p(a,c),p(e,b),p(d,a),p(b,b),p(c,c),
              p(e,c),p(d,b),p(d,c)],
             [<-p(a,X):0,p(X,Y)<-p(X,Z),e(Z,Y):13]
   e(X,Y):[e(a,b),e(e,a),e(d,e),e(b,c),e(c,b)],
          [p(X,Y)<-e(X,Y):5, p(a,Y)<-e(b,Y):5, p(e,Y)<-e(a,Y):5,
           p(d,Y)<-e(e,Y):5, p(b,Y)<-e(c,Y):5, p(c,Y)<-e(b,Y):5,
           p(a,Y)<-e(c,Y):5, p(e,Y)<-e(b,Y):5, p(d,Y)<-e(a,Y):5,
           p(b,Y)<-e(b,Y):5, p(c,Y)<-e(c,Y):5, p(e,Y)<-e(c,Y):5,
           p(d,Y)<-e(b,Y):5, p(d,Y)<-e(c,Y):5]
   q(X,Y):[q(a,b),q(a,d),q(e,a),q(d,e),q(b,c),q(b,d),q(c,b)],
          [e(X,Y)<-q(X,Y),r(Y):7]
   r(X):[r(a),r(b),r(c),r(e)],
        [e(a,b)<-r(b):4, e(a,d)<-r(d):4, e(e,a)<-r(a):4,
         e(d,e)<-r(e):4, e(b,c)<-r(c):4, e(b,d)<-r(d):4,
         e(c,b)<-r(b):4]

That machine fails since the answer it wants to add is a duplicate:
S: 
T: p(X,Y):[p(a,b),p(e,a),p(d,e),p(b,c),p(c,b),p(a,c),p(e,b),p(d,a),p(b,b),p(c,c),
              p(e,c),p(d,b),p(d,c)],
             [<-p(a,X):0,p(X,Y)<-p(X,Z),e(Z,Y):13]
   e(X,Y):[e(a,b),e(e,a),e(d,e),e(b,c),e(c,b)],
          [p(X,Y)<-e(X,Y):5, p(a,Y)<-e(b,Y):5, p(e,Y)<-e(a,Y):5,
           p(d,Y)<-e(e,Y):5, p(b,Y)<-e(c,Y):5, p(c,Y)<-e(b,Y):5,
           p(a,Y)<-e(c,Y):5, p(e,Y)<-e(b,Y):5, p(d,Y)<-e(a,Y):5,
           p(b,Y)<-e(b,Y):5, p(c,Y)<-e(c,Y):5, p(e,Y)<-e(c,Y):5,
           p(d,Y)<-e(b,Y):5, p(d,Y)<-e(c,Y):5]
   q(X,Y):[q(a,b),q(a,d),q(e,a),q(d,e),q(b,c),q(b,d),q(c,b)],
          [e(X,Y)<-q(X,Y),r(Y):7]
   r(X):[r(a),r(b),r(c),r(e)],
        [e(a,b)<-r(b):4, e(a,d)<-r(d):4, e(e,a)<-r(a):4,
         e(d,e)<-r(e):4, e(b,c)<-r(c):4, e(b,d)<-r(d):4,
         e(c,b)<-r(b):4]
\end{verbatim}
\normalsize

At this point, the only thing that can be scheduled is the
\verb|<-p(a,X)| suspension, which would match the two answers in
the table with first field {\tt a} and terminate.

\subsection{Relationship of STWFA and Bottom-Up}

It is clear that, in general, STWFA is not exactly bottom-up least
model computation, since STWFA depends on the initial goal.  For
example, consider the program:
\begin{verbatim}
  p(X) :- q(X).              q(a).
  r(X) :- q(X),p(X).         q(b).
\end{verbatim}
\normalsize
and the query \verb|p(X)|. In this case, \verb|r(X)| will never be
called and so with STWFA its extension will not be evaluated, whereas
bottom-up least-model evaluation would generate it.

Consider also the example:
\begin{verbatim}
  p(X) :- q(X),r(X),s(X).       q(a).
                                r(b).
                                s(c).
\end{verbatim}
\normalsize
and query {\tt p(X)}.  In this case, even though {\tt s/1} is
reachable from {\tt p/1} in the dependency graph of this program, {\tt
  s(X)} will never be called under STWFA since there is no {\tt X}
such that {\tt q(X)} and {\tt r(X)} is true.  So here again is a case
where bottom-up, which would, of course, compute the extension of
\verb|s/1|, differs from STWFA.

If these possibilities are excluded, we can show a close relationship
between STWFA and bottom-up evaluation.\\

\noindent {\it Theorem 1}\\
Assume H is a Horn program and P a predicate in H such that all
predicates in H are reachable from P in the predicate call graph of
H. (I.e., the goal predicate potentially depends on all predicates.)
Assume that for every rule body in H, there is some instance of it
that is true in the least model of H.  Then STWFA evaluation of
predicate P in H is equivalent to a bottom-up computation of the least
model of H.

\begin{proof}
(Sketch) We formulate a bottom-up Horn clause evaluator and then
  compare it to the multiple machine model of STWFA described above.

Consider the following formulation of a bottom-up Horn clause
evaluator.  For each clause, there is a process to generate facts for
instances of its head atom.  There is a global set of proven facts
that is initialized to empty.  The rule processes run as follows: Each
rule process looks to find an instance of its clause such that all its
body atoms appear in the global fact set.  For each such clause
instance, it adds the head atom to the global fact set, if it is not
already a member.  The processes continue to run to a fixpoint, i.e.,
until no process can add any new fact to the global set.

By synchronizing and scheduling the rule processes so that they
iteratively generate a new set of facts to be added to the fact set by
using facts only from earlier iterations, one obtains the well-known
iterative bottom-up evaluation algorithm.

However, any order of execution of the rule processes is possible, and
constitutes an evaluation method appropriately called bottom-up.

In STWFA, a machine that is generated for a rule when its most-general
subgoal is first entered corresponds to a rule process in the
bottom-up algorithm.  The execution of such a machine (and its
descendants), which suspend and fork off new machines to process new
answers, corresponds to the bottom-up rule process for the clause
that finds rule instances whose body atoms are in the global fact set.
Finally, a machine returning a new answer to a predicate corresponds
to a rule process adding a new fact to the global fact set.

The difference between the two algorithms is how the rule
processes/machines get set up in the first place.  For the bottom-up
algorithm, the rule processes are set up at the same time during an
initialization phase.  With STWFA, the corresponding machines for the
rules of a particular predicate are generated at the first call to a
goal for that predicate.  STWFA will generate relations only for
predicates on which the goal predicate depends, since it will not
initialize rule processor machines for other predicates that are never
called.

The theorem requires that the goal depends on all predicates in the
program, and that every clause has a true instance in the least model
of the program.  This will guarantee that STWFA will eventually call
every predicate and so all rule-processing machines will be
initialized.

We can fix a scheduling order for STWFA, and then find an equivalent
scheduling order for bottom-up evaluation with rule processes. 
\end{proof}

Of course, if the conditions of the theorem are not met, STWFA still
computes the correct answers to the query, and computes them in a
bottom-up manner; it just does not compute relations that are
unnecessary (given its left-to-right selection strategy).

\end{document}